\def\AdvCite{True} 
\renewcommand*{\multicitedelim}{\addcomma\space}
\newcommand{\myhref}[1]{%
	\iffieldundef{doi}
	{\iffieldundef{url}
		{#1}
		{\href{\strfield{url}}{#1}}}
	{\href{http://dx.doi.org/\strfield{doi}}{#1}}%
}
	\newlength{\temp@x}%
	\newlength{\temp@y}%
	\newlength{\temp@w}%
	\newlength{\temp@h}%
	\def\my@coords#1#2#3#4{%
		\setlength{\temp@x}{#1}%
		\setlength{\temp@y}{#2}%
		\setlength{\temp@w}{#3}%
		\setlength{\temp@h}{#4}%
		\adjustlengths{}%
		\my@pdfliteral{\strip@pt\temp@x\space\strip@pt\temp@y\space\strip@pt\temp@w\space\strip@pt\temp@h\space re}}%
	\def\my@pdfliteral#1{\pdfliteral page{#1}}
	\def\adjustlengths{}%
	\def\my@pdfliteral #1{}
	\def\adjustlengths{\setlength{\temp@h}{-\temp@h}\addtolength{\temp@y}{1in}\addtolength{\temp@x}{-1in}}%
	\def\Hy@colorlink#1{%
		\begingroup
		\ifHy@ocgcolorlinks
		\def\Hy@ocgcolor{#1}%
		\my@pdfliteral{q}%
		\my@pdfliteral{7 Tr}
		\else
		\HyColor@UseColor#1%
		\fi
	}%
	\def\Hy@endcolorlink{%
		\ifHy@ocgcolorlinks%
		\my@pdfliteral{/OC/OCPrint BDC}%
		\my@coords{0pt}{0pt}{\pdfpagewidth}{\pdfpageheight}%
		\my@pdfliteral{F}
		%
		\my@pdfliteral{EMC/OC/OCView BDC}%
		\begingroup%
		\expandafter\HyColor@UseColor\Hy@ocgcolor%
		\my@coords{0pt}{0pt}{\pdfpagewidth}{\pdfpageheight}%
		\my@pdfliteral{F}
		\endgroup%
		\my@pdfliteral{EMC}%
		\my@pdfliteral{0 Tr}
		\my@pdfliteral{Q}%
		\fi
		\endgroup
	}%
\g@addto@macro\bfseries{\boldmath}
\g@addto@macro\mdseries{\unboldmath}
\g@addto@macro\normalfont{\unboldmath}
\g@addto@macro\rmfamily{\unboldmath}
\g@addto@macro\upshape{\unboldmath}
\renewcommand{\paragraph}[1]{\medskip\noindent{\bf #1}\xspace}
\colorlet{DarkRed}{red!50!black}
\colorlet{DarkGreen}{green!50!black}
\colorlet{DarkBlue}{blue!50!black}
\declaretheorem[numberwithin=section]{theorem}
\declaretheorem[numberlike=theorem]{lemma}
\declaretheorem[numberlike=theorem]{corollary}
\declaretheorem[numberlike=theorem]{definition}
\declaretheorem[numberlike=theorem]{claim}
\declaretheorem[numberlike=theorem]{observation}
\crefname{algorithm}{Algorithm}{Algorithms}
\Crefname{algorithm}{Algorithm}{Algorithms}
\newcommand{\ot}{\tilde{O}}
\newcommand{\ignore}[1]{}
\newcommand{\distthrough}{{\sf DistThrough}}
\newcommand{\dist}{{\sf dist}}
\renewcommand{\sp}{\pi}
\newcommand{\hop}{{\sf hop}} 
\newcommand{\distest}{\hat{d}}
\newcommand{\diam}{{D}\xspace}
\newcommand{\congest}{\textsf{CONGEST}\xspace}
\newcommand{\poly}{\operatorname{poly}}
\newcommand{\polylog}{\operatorname{polylog}}
\newcommand{\congestion}{{\sf congestion}\xspace}
\newcommand{\dilation}{{\sf dilation}\xspace}
\newcommand{\FilteredBroadcast}{\mbox{{\sc RandFilteredBroadcast}}\xspace} 
\newcommand{\filteredbroadcast}{\FilteredBroadcast} 
\def\danupon#1{\marginpar{$\leftarrow$\fbox{D}}\footnote{$\Rightarrow$~{\sf #1 --Danupon}}}
\def\aaron#1{\marginpar{$\leftarrow$\fbox{A}}\footnote{$\Rightarrow$~{\sf #1 --Aaron}}}
\def\danupon#1{}
\def\aaron#1{}
\title{Distributed Exact Weighted All-Pairs Shortest Paths in Near-Linear Time} 
\author[1]{Aaron Bernstein}
\author[2]{Danupon Nanongkai}
\affil[1]{Rutgers University. {\tt bernstei@gmail.com}}
\affil[2]{KTH Royal Institute of Technology, Sweden. {\tt danupon@gmail.com}}
\date{}
\begin{document}

	\begin{titlepage}
		\maketitle
		\pagenumbering{roman}
		
\begin{abstract}
In the {\em distributed all-pairs shortest paths} problem (APSP), every node in the weighted undirected distributed network (the CONGEST model) needs to know the distance from every other node using least number of communication rounds (typically called {\em time complexity}). The problem admits $(1+o(1))$-approximation $\tilde\Theta(n)$-time algorithm and a nearly-tight $\tilde \Omega(n)$ lower bound [Nanongkai, STOC'14; Lenzen and Patt-Shamir PODC'15]\footnote{$\tilde \Theta$, $\tilde O$ and $\tilde \Omega$ hide polylogarithmic factors. Note that the lower bounds also hold even in the unweighted case and in the weighted case with polynomial approximation ratios~\cite{LenzenP_podc13,HolzerW12,PelegRT12,Nanongkai-STOC14}.}. For the exact case, Elkin [STOC'17] presented an $O(n^{5/3} \log^{2/3} n)$ time bound, which was later improved to $\tilde O(n^{5/4})$ [Huang, Nanongkai, Saranurak FOCS'17].	It was shown that any super-linear lower bound (in $n$) requires a new technique [Censor-Hillel, Khoury, Paz, DISC'17], but otherwise it remained widely open whether there exists a $\tilde O(n)$-time algorithm for the exact case, which would match the best possible approximation algorithm. 
	
This paper resolves this question positively: we present a randomized (Las Vegas) $\tilde O(n)$-time algorithm, matching the lower bound up to polylogarithmic factors. Like the previous $\tilde O(n^{5/4})$ bound, our result works for directed graphs with zero (and even negative) edge weights. In addition to the improved running time, our algorithm works in a more general setting than that required by the previous $\tilde O(n^{5/4})$ bound; in our setting (i) the communication is only along edge directions (as opposed to  bidirectional), and (ii) edge weights are arbitrary (as opposed to integers in  $\{1, 2, \ldots, \poly(n)\}$). The previously best algorithm for this more difficult setting required $\tilde O(n^{3/2})$ time [Agarwal and Ramachandran, ArXiv'18]  (this can be improved to $\tilde O(n^{4/3})$ if one allows bidirectional communication).

Our algorithm is extremely simple and relies on a new technique called {\em Random Filtered Broadcast}. Given any sets of nodes $A,B\subseteq V$ and assuming that every $b \in B$ knows all distances from nodes in $A$, and every node $v \in V$ knows all distances from nodes in $B$, we want every $v\in V$ to know  $\distthrough_B(a,v) = \min_{b\in B} \dist(a,b) + \dist(b,v)$ for every $a\in A$. Previous works typically solve this problem by broadcasting all knowledge of every $b\in B$, causing super-linear edge congestion and time. We show a randomized algorithm that can reduce edge congestions and thus solve this problem in $\tilde O(n)$ expected time.
\end{abstract}

		\newpage
		\setcounter{tocdepth}{2}
		\tableofcontents
	\end{titlepage}

	\newpage
	\pagenumbering{arabic}

	\section{Introduction} \label{sec:intro}


We study the distributed all-pairs shortest paths problem (APSP) defined on the \congest model of distributed network. A network is modeled by a weighted undirected $n$-node graph $G=(V, E)$.\footnote{As we will discuss later, we can also handle directed graphs.} 
Each node represents a processor with unique ID and infinite computational power that initially only knows its adjacent edges and their weights. 
Nodes can communicate with each other in {\em rounds}, where in each round each node can send a message of size $O(\log n)$ to each neighbor (weights play no role in the communication). 
The goal of APSP is for every node to know its distances from all other nodes. We want an algorithm that achieves this with smallest number of rounds, called {\em time complexity}.
%
It is usually expressed in terms of $n$ and $\diam$, where $n$ is the numer of nodes and $\diam$ is the diameter of the network when edge weights are omitted. Throughout we use $\tilde \Theta$, $\tilde O$ and $\tilde \Omega$ to hide polylogarithmic factors in $n$. See \Cref{sec:prelim} for details of the model.

The {\em approximate} version of the problem was known to admit (i) a $(1+o(1))$-approximation $\tilde O(n)$-time deterministic algorithm and (ii) an $\tilde \Omega(n)$ lower bound which holds even against randomized $\tilde O(\poly(n))$-approximation algorithms and when $\diam=O(1)$~\cite{LenzenP_stoc13,LenzenP-podc15,Nanongkai-STOC14}. The exact {\em unweighted} version was also settled with $\tilde\Theta(n)$ bound \cite{LenzenP_podc13,HolzerW12,FrischknechtHW12,PelegRT12,AbboudCK16}.\footnote{More precisely, the bound for the unweighted case is $\Theta(n/\log n)$. The lower bound holds against $(\polylog(n))$-approximation algorithms when the network is unweighted. The same lower bound also holds even for the easier problem of approximating the network diameter~\cite{FrischknechtHW12}. }
%
For the {\em exact weighted} case, nothing was known until the 2017 bound of $O(n^{5/3} \log^{2/3} n)$ by Elkin \cite{Elkin-STOC17}, which was later improved to $\tilde O(n^{5/4})$ \cite{HuangNS17}; both algorithms by \cite{Elkin-STOC17,HuangNS17} are randomized.  
On the lower bound side, Censor-Hillel, Khoury, and Paz \cite{Censor-HillelKP17} pushed the bound to $\Omega(n)$
and proved that the standard lower bound technique cannot provide a super-linear lower bound.  
%
%
%
Despite this, it was still widely open whether there was a new technique that implies a super-linear lower bound, or whether we can in fact solve the exact weighted case in $\tilde O(n)$ time, like the approximate and the unweighted cases. 

\paragraph{Our result.} We present an randomized (Las Vegas) $\tilde O(n)$-time algorithm. 
This essentially settles the distributed APSP problem, with the key open remaining problem being whether deterministic algorithms can achieve the same bound. Like the previous $\tilde O(n^{5/4})$-time algorithm, our algorithm works in a more difficult model where each node must send the same message to every neighbor in each round ({\sf broadcast} \congest), and can handle a more general case of inputs: directed graphs with zero edge weights; in fact a standard reduction shows that our algorithm can also handle negative weights in $\tilde O(n)$ time.   

In addition to the improved running time, our algorithm works in a more general setting than that required by the previous $\tilde O(n^{5/4})$ bound. The previous $\tilde O(n^{5/4})$-time algorithm of \cite{HuangNS17} requires that (i) the communication is {\em bidirectional} (unaffected by edge directions), and (ii) edge weights are in $\{1, \ldots, \poly(n)\}$. 
%
%
Although these are typical assumptions, some works have explored the possibilities to avoid them, e.g. \cite{Elkin-STOC17,AgarwalRKP18-PODC,AgarwalR18-Arxiv-3/2,AgarwalR18-Arxiv2-4/3} (the second assumption was also mentioned in \cite{HuangNS17} as their main drawback, since their guarantee depends on the number of bits needed to represent edge weights). If we do without both assumptions (so communication is only along edge directions 
, and edge weights are arbitrary as long as a distance can be sent through a link in one round), the previously best algorithm for this more difficult setting required $\tilde O(n^{3/2})$ time \cite{AgarwalR18-Arxiv-3/2} \footnote{We emphasize that in the case of uni-directional communication, node $v$ can learn its distance from $u$ only if there is a directed path from $u$ to $v$; otherwise, it is impossible for $v$ to learn such information.}. If bidirectional communications are allowed, then the bound can be improved to $\tilde O(n^{4/3})$ \cite{AgarwalR18-Arxiv2-4/3}. 
Our algorithm does not require any of the above assumptions, and our $\tilde O(n)$ bound subsumes all above results, except that the $\tilde O(n^{3/2})$-time algorithm in \cite{AgarwalR18-Arxiv-3/2} is deterministic.

Our algorithm is also much simpler than the previous $\tilde{O}(n^{5/4})$ state-of-the-art. Given that our result is essentially optimal, we believe that its simplicity is a plus.


\paragraph{Other related works.} As noted earlier, one aspect left to understand distributed APSP is the performance of deterministic algorithms. The current best time for deterministic algorithms is $\tilde O(n^{3/2})$, first achieved by Agarwal~et~al. \cite{AgarwalRKP18-PODC} and later tailored to work without bidirectional communication by Agarwal and Ramachandran \cite{AgarwalR18-Arxiv-3/2}. For a summary of previous algorithms and their properties, see  \cite[Table~1]{AgarwalR18-Arxiv2-4/3}. 

Distributed APSP is sometimes referred to as {\em name-independent routing schemes}. See, e.g. \cite{LenzenP_stoc13,LenzenP-podc15} for discussions and results on another variant called {\em name-dependent routing schemes} which is not considered in this paper. These papers also show an application of distributed APSP to routing tables constructions.

The previous lack of understanding for exact APSP in fact reflects a bigger issue in the field of distributed graph algorithms: Studies in the past few  years have led to tight {\em approximation} algorithms for several graph problems; for example, single-source shortest paths (SSSP), minimum cut, and maximum flow can be $(1+o(1))$-approximated in $\tilde O(\sqrt{n}+\diam)$ time~\cite{HenzingerKN-STOC16,BeckerKKL16,Nanongkai-STOC14,NanongkaiS14_disc,GhaffariK13,GhaffariKKLP15}\footnote{For the maximum flow algorithm, there is an extra $n^{o(1)}$ term in the time complexity.}, and the time bounds are tight up to polylogarithmic factors \cite{DasSarmaHKKNPPW12,Elkin06,PelegR00,KorKP13,ElkinKNP14}. In contrast, except for minimum spanning tree (e.g. \cite{KuttenP98,PanduranganRS-STOC17,Elkin-arxiv17-mst}), not much was known for {\em exact} algorithms until 2017, when algorithms for exact SSSP and APSP started to appear (e.g. \cite{GhaffariL18,ForsterN18,Elkin-STOC17,HuangNS17,AgarwalRKP18-PODC,AgarwalR18-Arxiv-3/2,AgarwalR18-Arxiv2-4/3}). Settling the exact cases for other problems remains a major open problem.

\paragraph{Techniques.}
%
%
The cornerstone of our algorithm is a new technique called {\em random filtered broadcasting}. We give an overview in \cref{sec:overview}; loosely speaking, the technique applies to settings where one needs to broadcast a large amount of information to every vertex, but in the end each vertex only cares about the ``best'' message it receives. We show how to use randomization to  filter out most of the messages, and reduce the congestion on each edge. Although relatively simple, our result in this paper show the technique to be very powerful. It is also quite general, so we have strong reason to believe that it will find application in other distributed algorithms for the \congest model, especially those related to distances.

On a more concrete level, we use random filtered broadcasting to devise a primitive which leads to our APSP algorithm, but which we think may prove useful in its own right. 
In particular,  
%
%
%
given any sets of nodes $A,B\subseteq V$ (nodes know if they are in these sets) and assuming that every $b \in B$ knows all distances from nodes in $A$, and every node $v \in V$ knows all distances from nodes in $B$, we want every $v\in V$ to know  $\distthrough_B(a,v) = \min_{b\in B} \dist(a,b) + \dist(b,v)$ for every $a\in A$.\footnote{Note that we actually have to handle a bit more general case where {\em not} all distances from $a\in A$ are known to nodes in $B$.} This was previously an obstacle for APSP.
In this paper, we show how to do this in $\ot(n)$ time. Armed with this black-box, we are able use a very natural framework for APSP.
%
Additionally, if we only care about hop-distances at most $h$, then we can reduce the number of rounds to $\ot(|A| + h)$. We hope that just as Bellman-Ford is often used as a primitive that allows one to separately handle shorter and longer hop-distances, our new algorithm for $\distthrough$ can be used as a primitive in other distributed shortest path algorithms.

\paragraph{Remark.} Throughout the paper we only show that the output is correct with high probability. As discussed in \cite{HuangNS17}, this can be made Las Vegas since in $\tilde O(n)$ time we can check the correctness, as follows. First, every node lets its neighbors know about its distances from other nodes (this takes $O(n)$ time). Then, every node checks if it can improve its distance from any node using the distance knowledge from neighbors. If the answer is ``no'' for every node, then the computed distance is correct. If some node answers ``yes'', it can broadcast its answer to all other nodes in $O(n)$ time.

	\section{High-Level Overview} \label{sec:overview}

We start with a randomized hierarchy: for every integer $1 \leq i \leq \log(n)$, we construct set $S_i$ by independently sampling each vertex with probability $1/2^i$; we set $S_0 = V$ and $S_{\log(n) + 1} = \emptyset$. Then with high probability: $|S_i| = \ot(n/2^i)$, and any shortest path with at least $\ot(2^i)$ vertices contains a vertex from $S_i$.

Our algorithm then proceeds in phases, following a standard framework for shortest path algorithms. We go from phase $i = \log(n)$ \emph{down to} phase $0$. The guarantee at the end of phase $i+1$ is that every vertex $v$ knows the shortest distances from each $s \in S_{i+1}$; that is, $d^v_{i+1} = \dist(s,v)$. Let us now consider phase $i$. The goal is for every node $v$ to learn all distances $\dist(s,v)$ for $s \in S_{i}$ and $v \in V$. First, each vertex $s$ in $S_i$ runs Bellman-Ford up to hop-distance $\ot(2^i)$: this gives us all distances $\dist(s,v)$ for which $\hop(s,v) = \ot(2^i)$. On the other hand, if $\hop(s,v)$ is large, then we know that there exists a vertex $s_{i+1} \in S_{i+1}$ on the shortest path $\pi(s,v)$. Note, moreover, that because of phase $i+1$ we already know $\dist(s_{i+1},v)$; it is also not hard to ensure that we know $\dist(s, s_{i+1})$ because of the Bellman-Ford computation from $s_i$. 

Thus, to complete the phase $i$, all we have left is to solve the sub-problem $\distthrough_{S_{i+1}}(S_i, V)$: we assume that we already know distances from $S_i$ to $S_{i+1}$ and from $S_{i+1}$ to $V$, and the goal is to compute $\distthrough_{S_{i+1}}(s,v) = \min_{s_{i+1} \in S_{i+1}} \dist(s,s_{i+1}) + \dist(s_{i+1},v)$ for every $s \in S_i$ and $v \in V$. (In fact the Bellman-Ford computation from each $s \in S_i$ only gives us accurate distances to some of the $S_{i+1}$, but this ends up having no effect, so for this overview we stick to the simpler description above.)

Note that \distthrough\ is a very natural problem in and of itself, and also comes up in many other shortest path algorithms. The issue is that it is not clear how to approach this problem in the distributed setting. The naive solution would be to have each $s_{i} \in S_{i+1}$ broadcast $\dist(s,s_{i+1})$ for each $s \in S_{i+1}$. But this incurs a congestion of $O(|S_i|^2)$, which is only efficient when $S_i$ is relatively small. For this reason, previous algorithms had to deviate from the simple framework described above, and typically tried to balance two different approaches, one for small-hop distances, and one for large ones; in the former case, a Bellman-Ford-style approach is efficient, while for the latter case the relevant $S_i$ is small, and so a broadcasting-type-approach is efficient. However, such a trade-off necessarily results in a super-linear round complexity, such as the state of the art of $O(n^{1.25})$.

Our main contribution is to show that $\distthrough_B(A,C)$ can be solved in $\ot(n)$ time, for \emph{any} sets $A,B,C \subseteq V$, regardless of their size. Not only does this lead to an optimal round complexity (up to log factors), but it also leads to a very clean and simple solution to the problem, as we are able to use the  framework described above, without needing to balance multiple different approaches. 

\paragraph{Random Filtered Broadcasting:} We solve $\distthrough_B(A,C)$ by using a new technique that we refer to as random filtered broadcasting. We focus on a fixed $a \in A$, and show how to solve $\distthrough_B(a,C)$ with only $\ot(1)$ congestion on each edge; using \cref{thm:Ghaffari}, we can then parallelize the algorithms for all $a \in A$ in time $\ot(|A|) = \ot(n)$. Let us consider the naive {\em broadcasting} approach again: each vertex in $b \in B$ knows all distances from $A$, so it sends a message $M(a,b) = (a, \dist(a,b))$ for every $a \in A$. Whenever a vertex $c \in C$ receives message $M(a,b)$, it can use its knowledge of $\dist(b,c)$ to compute $\distthrough_b(a,c)$. Thus, if a vertex $c$ receives $M(a,b)$ for all $b \in B$ it can compute $\distthrough_B(a,c)$.

To reduce the congestion on each edge, we allow vertices to filter out certain message $M(a,b)$, i.e. to not pass them on to their neighbors. Consider the following {\em filtering heuristic}: if a vertex $v$ sees a message $M(a,b')$, but $v$ has previously seen a message $M(a,b)$ with $\distthrough_b(a,v) \leq \distthrough_{b'}(a,v)$, then $v$ does not pass on the message $M(a,b')$. This reduces the total number of messages sent, and each $c \in C$ still correctly computes $\distthrough_B(a,c)$; the reason is that if $b$ is the vertex in $B$ that minimizes $\dist(a,b) + \dist(b,c)$, then it is not hard to see that every node on $\pi(b,c)$ will pass on message $M(a,b)$ (or some equivalently good message, in case of a tie.)

 Unfortunately in the worst-case the congestion might be no better than before, as each vertex $v$ might receive the messages $M(a,b)$ in the worst possible order -- that is, in decreasing order of $\dist(a,b) + \dist(b,v)$; in this case, $v$ will pass on every message it sees. To overcome this, we use a randomized filter. We let $B_0 = B$, and obtain each $B_j$ by sampling each node in $B_{j-1}$ with probability $1/2$. Our algorithm then proceeds in iterations, starting from $j = \log(n)$ down to $j=0$. In iteration $j$, we broadcast all message $M(a,b)$ for $b \in B_{j}$; however, as in the above paragraph, a vertex $v$ filters out messages unless they are strictly better than all previous messages $M(a,b')$ seen by $v$ -- i.e. unless $\distthrough_b(a,v)$ is smaller. The basic argument is that with high probability, $v$ will filter out all but $O(\log(n))$ messages in iteration $j$; the reason is that if we look at the $O(\log(n))$ $b \in B_j$ that are ``best'' for $v$, then with high probability at least one of them is in $B_{j+1}$, and so was already seen iteration in $j+1$, and will filter out all messages not in the top $O(\log(n))$. We thus have a total congestion of $O(\log(n))$ per iteration, and so $O(\log^2(n))$ congestion to compute $\distthrough_B(a,C)$, and $\ot(n)$ time for $\distthrough_B(A,C)$.

	\section{Preliminaries}\label{sec:prelim}

\subsection{The CONGEST Model} \label{sec:model}


The communication network is modeled by an undirected unweighted  $n$-node $m$-edge graph $G$, where nodes model the processors and  edges model the {\em bounded-bandwidth} links between the processors. Let $V(G)$ and $E(G)$ denote the set of nodes and (directed) edges of $G$, respectively. 
The processors  (henceforth, nodes) are assumed to have unique IDs in the range of $\{0, 1, \ldots, n-1\}$ and infinite computational power. Typically nodes' IDs are assumed to be in the range of $\{1, \ldots, \poly(n)\}$. But as observed in \cite{HuangNS17}, in $O(n)$ time the range can be reduced to $\{0, 1, \ldots, n-1\}$. 
Each node has limited topological knowledge; in particular, it only knows the IDs of its neighbors and knows {\em no} other topological information (e.g., whether its neighbors are linked by an edge or not). 

Nodes may also accept some additional inputs as specified by the problem at hand. For the case of graph problems, the additional input is typically {\em edge weights}. Let $w:E(G)\rightarrow \{1, 2, \ldots, \poly(n)\}$ be the edge weight assignment.\footnote{Note that it might be natural to include $\infty$ as a possible edge weight. But this is not necessary since it can be replaced by a large weight of value $\poly(n)$.} We refer to network $G$ with weight assignment $w$ as the {\em weighted network}, denoted by $G(w)$. The weight $w(u,v)$ of each edge $(u,v)$ is known only to $u$ and $v$. 
%
%

We measure the performance of algorithms by its running time, defined as the worst-case number of {\em rounds} of distributed communication. 
At the beginning of each round, all nodes wake up simultaneously. Each node $u$ then sends an arbitrary message of $O(\log n)$ bits through each edge $(u,v)$, and the message will arrive at node $v$ at the end of the round. 
%
We assume that nodes always know the number of the current round for simplicity.
%
%
%
In this paper, the running time is analyzed in terms of the number of nodes ($n$). 
Since $n$ can be computed in $O(\diam)$ time, where $\diam$ is the diameter of $G$, we will assume that every node knows $n$.

\paragraph{Remark on edge weights and directions:} Note that our algorithm in fact works in the most restricted model studied in the literature, where edge weights are ``arbitrary'', edges are {\em directed}, and communications are {\em unidirectional}. 

It was commonly assumed in the literature (e.g., \cite{KhanP08,LotkerPR09,KuttenP98,GarayKP98,GhaffariK13,HuangNS17,GhaffariL18,ForsterN18}) that the maximum weight is $\poly(n)$; so, each edge weight can be sent through an edge (link) in one round. A more general ``arbitrary weight'' model has been considered in, e.g., \cite{Elkin-STOC17,AgarwalRKP18-PODC,AgarwalR18-Arxiv-3/2,AgarwalR18-Arxiv2-4/3}. In this model, edge weights can be arbitrary, and it is assumed that communication links have enough capacity to deliver a distance information in one round. Some algorithms do not work in this model, including the previously best $\tilde O(n^{5/4})$-time algorithm \cite{HuangNS17}. 

The case of directed graph has also been studied in the literature. One can consider further whether the communication is {\em bidirectional}, i.e. nodes can communicate on an edge regardless of its direction, or the more restricted {\em unidirectional} case, where the communication has to be done along edge directions. The previously best $\tilde O(n^{5/4})$-time algorithm \cite{HuangNS17} has to assume bidirectional communication.

%

\subsection{Notation and Problem Definition}\label{sec:notations}

Let $G = (V,E)$ be a directed network with arbitrary non-negative weights: $V$ is the set of nodes, and $E$ the set of edges.
Let $n = |V|$ and $m = |E|$. Let $(u,v)$ denote the edge \emph{from} $u$ \emph{to} $v$,
and let $w(u,v)$ be the weight of this edge.
For every pair of nodes $s$ and $t$ in $G$, let $\dist(s, t)$ be the shortest distance from $s$ to $t$ in $G$. 
Note that since the underlying graph $G$ is directed, we might have $\dist(s,t) \neq \dist(t,s)$.
Let $\sp(s,t)$ refer to the shortest path from $s$ to $t$; if there are multiple such paths, choose one of the shortest paths with the minimal number of edges.
Let $\hop(s,t)$ be the number of edges on $\sp(s,t)$.

Throughout the algorithm, each vertex $v$ will maintain for every $u$ various distance estimates $d^v(u,v)$. 
When we refer to such estimates, the superscript $v$ will \emph{always} refer to the node that possesses this knowledge.


\begin{definition}[All-pairs shortest paths (APSP)]
An algorithm for distributed APSP must terminate with every vertex $v \in V$ knowing a value $d^v(u,v) = \dist(u,v)$, for every $u \in V$. 
\end{definition}

We now define a notion of {\em accuracy} for the local information at $v$.

\begin{definition}[$h$-hop-accurate]
	\label{dfn:hop-accurate}
For any positive integer $h$, We say that a distance estimate $d^v(u,v)$ is $h$-hop-accurate if the following holds: 
{\bf 1)} $d^v(u,v) \geq \dist(u,v)$ and {\bf 2)} if $\hop(x,y) \leq h$ then $d^v(u,v) = \dist(u,v)$. {\bf Note:} if $h \geq n$, then $h$-hop accuracy guarantees $d^v(u,v) = \dist(u,v)$.

\end{definition}

We say that an event holds {\em with high probability} (w.h.p.) if it holds with probability at least $1-1/n^c$, where $c$ is an arbitrarily large constant.



\subsection{Distributed Algorithmic Primitives}
\label{sec:basic-algorithms}

\paragraph{The Bellman-Ford Algorithm.} This well-known algorithm computes SSSP from a source $s$ on a network $G$. The algorithm runs for $h$ rounds, where $h$ is an input given by the user. The algorithm offers the following guarantee: upon termination, $d^t(s, t)$ is $h$-hop-accurate for every node $t$ in $V$. See \Cref{sec:appendix-bellman} for a brief description of the algorithm.

\paragraph{Scheduling of Distributed Algorithms.} Consider $k$ distributed algorithms $A_1, A_2  \dots , A_k$. Let \dilation be such that each algorithm $A_i$ finishes in \dilation rounds if it runs individually. Let \congestion be such that there are at most \congestion messages, each of size $O(\log n)$, sent through each edge (counted over all rounds), when we run all algorithms together. We note the following result of Ghaffari \cite{Ghaffari15-scheduling}:

\begin{theorem}[\cite{Ghaffari15-scheduling}]\label{thm:Ghaffari}
There is a distributed algorithm that can execute $A_1, A_2  \dots , A_k$ altogether in $O(\dilation+\congestion\cdot \log n)$ time.
\end{theorem}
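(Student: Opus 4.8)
This is a known result, so in practice one simply cites \cite{Ghaffari15-scheduling}; but here is how I would reprove it. The plan is to use the \emph{random delay} technique (familiar from packet routing) adapted to the \congest model. First I would have each algorithm $A_i$ draw an independent uniformly random integer delay $\delta_i \in \{1,2,\ldots,\Theta(\congestion)\}$ and consider the \emph{virtual schedule} in which $A_i$ executes its $t$-th internal round during virtual round $\delta_i + t$. This virtual schedule has length $O(\congestion + \dilation)$, and the purpose of the random offsets is to de-synchronize the $A_i$ so that no single edge is overloaded in any one virtual round (in the naive synchronous execution all $k$ algorithms could, in the worst case, want to use the same edge in round $1$).

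The key lemma I would prove is that, with high probability, every edge carries only $O(\log n)$ messages in every virtual round. Fix an edge $e$ and a virtual round $\tau$. By hypothesis at most \congestion\ messages ever traverse $e$; each such message, originating from some $A_i$ at a fixed internal round, lands in virtual round $\tau$ only if $\delta_i$ takes one specific value among the $\Theta(\congestion)$ possibilities, hence with probability $O(1/\congestion)$. Thus the expected number of messages crossing $e$ in round $\tau$ is $O(1)$, and since these indicator events are functions of the mutually independent delays $\{\delta_i\}$, a Chernoff bound shows the load is $O(\log n)$ except with probability $n^{-\omega(1)}$. A union bound over the $\le n^2$ edges and the $\poly(n)$ virtual rounds completes the lemma.

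Given the lemma, I would turn the virtual schedule into a real execution by allocating a block of $\Theta(\log n)$ consecutive real rounds to each virtual round, during which every edge delivers its (at most $O(\log n)$) messages for that virtual round one at a time. Correctness is immediate because each algorithm's internal round order is preserved; and since every node always knows the current round number, all nodes agree on the block boundaries with no extra communication. This already yields running time $O((\congestion + \dilation)\log n)$, which for every application in this paper (where $\congestion = \polylog(n)$) is within a $\polylog(n)$ factor of the claimed bound.

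The main obstacle is shaving the extra logarithmic factor off the $\dilation$ term to reach the stated $O(\dilation + \congestion \cdot \log n)$. Uniformly blowing up every virtual round by $\Theta(\log n)$ is wasteful: the \emph{total} load on any edge over the whole virtual schedule is still only \congestion, so on a typical edge almost every virtual round carries a single message. To exploit this I would replace the block-by-block simulation with per-edge FIFO routing and bound the cumulative queueing delay of any message by a Leighton--Maggs--Rao-style argument, which can be carried out in the distributed model precisely because every node knows the round index and can run the buffered routing locally. This refinement is the delicate part, and the reason the theorem is quoted rather than reproved in full here.
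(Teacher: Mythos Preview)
The paper does not prove this theorem at all: it is quoted verbatim from \cite{Ghaffari15-scheduling} as a black box, with no accompanying argument. Your opening sentence already says exactly this, and that is the entire ``proof'' the paper gives.

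Your sketch of the random-delay argument is a faithful outline of how the cited reference actually establishes the bound, and the analysis you give for the weaker $O((\dilation+\congestion)\log n)$ bound is correct. One small point worth tightening: when you invoke Chernoff, the relevant independence is not that ``the indicator events are functions of the mutually independent delays'' in general, but specifically that for a fixed edge $e$ and fixed virtual round $\tau$, each algorithm $A_i$ contributes at most one message (namely at its internal round $\tau-\delta_i$, if it uses $e$ then), and these $0/1$ contributions are independent across $i$ because each depends only on its own $\delta_i$. You also correctly flag that removing the $\log n$ from the $\dilation$ term is the nontrivial part; in Ghaffari's paper this is done not by per-edge FIFO queueing in the LMR sense, but by grouping virtual rounds into windows of length $\Theta(\log n)$ and showing each window has $O(\log n)$ load per edge, so each window can be simulated in $O(\log n)$ real rounds---which directly gives $O(\dilation+\congestion\log n)$. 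Your alternative route via queueing delays would also work but is heavier machinery than needed here.
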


\paragraph{Negative edge weights}: If the original graph has negative weights (and no negative-weight cycles), then we can use the idea of reduced weights from Johnson's algorithm \cite{Johnson77} to transform the graph into a new graph with non-negative edge weights that has the same shortest paths as the original graph. The transformation requires $O(n)$ rounds. We can thus assume for the rest of the paper that weights are non-negative. See \Cref{sec:negative-weights} for more details.









	\section{The Algorithm}

Define $S_0 = V$. Let $k = \log(n)$, and for each $i=0, \ldots, k$, select each node to $S_i$ with probability $(1/2)^i$ (every node knows whether it is in $S_i$ or not). Let $S_{k+1}=\emptyset$. (Note that we do not require $S_{i+1} \subseteq S_i$.) The following facts follow from standard techniques.

\begin{lemma}\label{thm:standard-bounds} W.h.p., the following holds for every $i$. 
	\begin{itemize}[noitemsep]
		\item  $|S_i|=O(n \log n / 2^i)$, and 
		\item for a large enough constant $c$ and for every pairs of nodes $u$ and $v$ such that $\hop(u, v)\geq c2^i\log n$, the shortest path $\sp(u,v)$ contains a node in $S_i$.
	\end{itemize}
\end{lemma}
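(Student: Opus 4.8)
The plan is to prove the two items of \Cref{thm:standard-bounds} separately, using standard Chernoff/union-bound arguments, and then combine them with a union bound over all $O(\log n)$ values of $i$.

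\textbf{Size bound.} For a fixed $i$, each node is placed in $S_i$ independently with probability $(1/2)^i$, so $\mathbb{E}[|S_i|] = n/2^i$. If $n/2^i \geq c'\log n$ for a suitable constant $c'$, a Chernoff bound gives $|S_i| \le 2\mathbb{E}[|S_i|] = O(n/2^i)$ with probability $1 - n^{-\Omega(c')}$, which is at most $O(n\log n/2^i)$ as claimed (in fact better). If $n/2^i < c'\log n$, then the expectation is small; here one uses a Chernoff bound in the regime of small expectation (or simply: the probability that more than $c''\log n$ out of $n$ Bernoulli$(2^{-i})$ variables are $1$ is at most $\binom{n}{c''\log n}2^{-ic''\log n} \le (en2^{-i}/(c''\log n))^{c''\log n}$), which is at most $n^{-c}$ for $c''$ large enough, since $n2^{-i} < c'\log n$ makes the base bounded by a constant $<1$. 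Either way $|S_i| = O(n\log n/2^i)$ w.h.p.

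\textbf{Hitting-set property.} Fix $i$ and a pair $u,v$ with $\hop(u,v) \ge c2^i\log n$. The path $\sp(u,v)$ has at least $c2^i\log n$ vertices; take any $c2^i\log n$ of them — say the first $c2^i\log n$ along the path. The probability that none of these is sampled into $S_i$ is $(1-2^{-i})^{c2^i\log n} \le e^{-c\log n} = n^{-c}$. Now union-bound over all $\le n^2$ pairs $(u,v)$: the probability that some bad pair exists is at most $n^2 \cdot n^{-c} = n^{2-c}$, which is $\le n^{-c+2}$; choosing the constant in the statement appropriately (or starting from a $c$ larger by $2$) makes this w.h.p. A subtle point worth a sentence: different pairs $(u,v)$ are \emph{not} independent, but the union bound does not need independence, only the per-pair failure probability, so this is fine. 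Also $\sp(u,v)$ is a fixed (deterministic) path depending only on $G$ and $w$, not on the sampling, so conditioning is not an issue.

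\textbf{Combining.} Apply a union bound over the $k+1 = O(\log n)$ indices $i$, losing only a $\log n$ factor in the failure probability, which is absorbed into the ``w.h.p.'' quantifier by taking $c$ slightly larger. The only mild obstacle is handling the size bound in the regime where $n/2^i$ is at or below $\Theta(\log n)$ — there the multiplicative Chernoff bound in its usual ``$\ge 2\mathbb{E}$'' form is too weak, and one instead argues that exceeding an \emph{additive} threshold of $\Theta(\log n)$ is exponentially unlikely; this is exactly why the statement is phrased with the extra $\log n$ factor rather than just $n/2^i$. Everything else is routine.
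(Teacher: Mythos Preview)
Your proposal is correct and is precisely the ``standard techniques'' the paper alludes to; the paper does not spell out a proof of this lemma at all, so your Chernoff-plus-union-bound argument (including the care you take with the small-expectation regime to justify the extra $\log n$ in the size bound) is exactly what is needed.
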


\begin{algorithm}
	\caption{Phase $i$ of the Main APSP Algorithm}\label{alg:main-algo2}
	
	\KwIn{Every node $v$ knows the phase number $i$. For every pair of nodes $s\in \cup_{j\geq i+1} S_j$ and $v\in V$, $v$ knows $d^v_{i+1}(s,v)=\dist(s,v)$.}
	\KwOut{For every pair of nodes $s \in \cup_{j\geq i} S_i$ and $v \in V$, $v$ knows $d^v_{i}(s,v)=\dist(s,v)$.}

	
	Run Bellman-Ford from every node $s\in S_i$ up to depth $c2^{i+1}\log n$, for a large enough constant $c$.  Let $\distest^v_i(s, v)$ be the resulting distance estimate each node $v$ learns about $s\in S_i$ from this step.  Note that $\distest^v_i(s, v)$ is $(c2^{i+1}\log n)$-hop accurate. 
	\tcc{By \Cref{thm:standard-bounds}, this takes $O(|S_i|2^{i+1}\log n)= O(n\log^2 n)$ rounds w.h.p. } \label{line:main-bellman}
	
	
	Call \filteredbroadcast($s$, $S_{i+1}$, $\{\distest^b_i(s, b)\}_{b\in S_{i+1}}$, $\{d^v_{i+1}(b, v)\}_{b\in S_{i+1}, v\in V}$) 
	(cf. \Cref{alg:filtering2}) in parallel for every node $s\in S_i$, using Theorem \ref{thm:Ghaffari} to schedule the parallel agorithms. We show in \Cref{sec:filtering-correctness} that this makes every node $v\in V$ know  
	$$\bar{d}^v_i(s, v):=\min_{b\in S_{i+1}} \left(\distest^b_i(s, b)+d^v_{i+1}(b, v)\right) = \min_{b\in S_{i+1}} \left(\distest^b_i(s, b)+\dist(b, v)\right) \,.$$
	\label{line:main-filtering}
	
	Every node $v$ internally computes $d^v_i(s, v):=\min(d_{i+1}^v(s, v), \distest^v_i(s, v), \bar{d}^v_i(s, v))$ for every node $s\in \cup_{j\geq i} S_i$. \label{line:main-internal}

\end{algorithm}

%


Our algorithm runs in {\em phases}, starting from $i=k$ down to $0$. In each phase, we execute \Cref{alg:main-algo2}. At the end of phase $i$, every $v \in V$ knows distance $d^v_{i}(s,v)=\dist(s,v)$ for every $s \in S_i$. Since $S_0 = V$, the algorithm terminates with knowledge of APSP.

In \Cref{alg:filtering2}, we describe the \filteredbroadcast algorithm. Note that in our main algorithm, the set of between-nodes $B$ is always equal to some $S_i$; but the \filteredbroadcast subroutine in fact works for an arbitrary set $B$, so we describe in its full generality.

\subsection{Correctness of the Main Algorithm (\Cref{alg:main-algo2})}

In this subsection we show that the output of phase $i$: for every pair of nodes $s \in \cup_{j\geq i} S_i$ and $v \in V$, $v$ knows $d^v_{i}(s,v)=\dist(s,v)$.
Since $d_{i+1}^v(s, v)=\dist(s, v)$ for every $s\in \cup_{j\geq i+1} S_j$, it is enough to show that $d_{i}^v(s, v)=\dist(s, v)$ for every $s\in S_i$. It is clear that $d_{i}^v(s, v) \geq \dist(s, v)$ because every distance returned by our algorithm corresponds to some path in the graph; we now complete the proof by showing that 
$d_{i}^v(s, v) \leq \dist(s, v)$.   

Consider any fixed pair $s\in S_i$ and $v\in V$, and let $c$ be the constant in Lemma \ref{thm:standard-bounds}. 

\paragraph{Case 1: $\hop(s, v)\leq c2^{i+1}\log n$.} Then in Step \ref{line:main-bellman},  $d_i^v(s,v) \leq \distest_i^v(s, v)=\dist(s, v)$ because by the properties of Bellman-Ford, $\distest_i^v(s, v)$ is $c2^{i+1}\log n$-hop-accurate (see \cref{dfn:hop-accurate}).

\paragraph{Case 2: $\hop(s, v)> c2^{i+1}\log n$.} By \Cref{thm:standard-bounds}, there exists a node $b\in S_{i+1}$ that is contained in the shortest path $\sp(s,v)$ and $\hop(s, b)\leq c2^{i+1}\log n$.  Bellman-Ford in Step \ref{line:main-bellman} ensures $\distest_i^b(s, b)=\dist(s, b)$, and since $b \in S_{i+1}$, we have  $d_{i+1}^v(b, v)=\dist(b, v)$ (by the input assumption for phase $i$). Thus: $
d_i^v(s,v) \leq \bar{d}_i^v(s, v)\leq \distest_i^b(s, b)+d_{i+1}^v(b, v)=\dist(s, v)$.

\subsection{Correctness of \filteredbroadcast (\Cref{alg:filtering2})}\label{sec:filtering-correctness}

\begin{lemma}\label{thm:filtering-correctness}
	For any $j\leq \log(n)+1$ and every node $v$, when Iteration $j$ terminates $v$ knows $output^v = \min_{b\in B_j} \distest^b(s, b)+d^v(b, v)$; here we define $\min_{b\in \emptyset}(\cdot)=\infty$.
\end{lemma}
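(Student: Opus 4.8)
The plan is to prove \Cref{thm:filtering-correctness} by downward induction on $j$, from $j = \log(n) + 1$ down to $j = 0$, mirroring the iteration structure of \filteredbroadcast. The base case $j = \log(n) + 1$ should be trivial: before any messages are broadcast, $B_{\log(n)+1} = \emptyset$ (or whatever the highest index is in the algorithm), so the claimed quantity is $\min_{b \in \emptyset}(\cdot) = \infty$, which matches the initialization of $output^v$. For the inductive step, I would assume that after Iteration $j+1$ every node $v$ knows $output^v = \min_{b \in B_{j+1}} \distest^b(s,b) + d^v(b,v)$, and show that after Iteration $j$ it knows $\min_{b \in B_j} \distest^b(s,b) + d^v(b,v)$.

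The heart of the argument is to track, for a fixed target node $v$ and a fixed $b^\star \in B_j$ achieving the minimum $\distthrough_{b^\star}(s,v) = \min_{b \in B_j} \distest^b(s,b) + d^v(b,v)$, that the message $M(s, b^\star)$ — or a message at least as good for $v$ — actually propagates along the shortest path $\sp(b^\star, v)$ all the way to $v$ during Iteration $j$. Concretely: $b^\star$ itself originates $M(s, b^\star)$, and I would show by induction along the hops of $\sp(b^\star, v)$ that each node $x$ on this path either forwards $M(s, b^\star)$ or already holds a message $M(s, b')$ with $\distthrough_{b'}(s, x) \le \distthrough_{b^\star}(s, x)$. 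The filtering rule only suppresses a message when something weakly better has already been seen, so the value reaching $v$ is at most $\distthrough_{b^\star}(s, v)$; combined with the lower-bound direction (every value $output^v$ ever takes corresponds to some genuine $\distthrough_b(s,v)$ with $b$ actually in $B_j$ — using that messages are only created by nodes in $B_j$ in this iteration or carried over from $B_{j+1} \subseteq$ the relevant set, and $d^v(b,v)$ is $v$'s own stored estimate), this pins $output^v$ to exactly the claimed minimum. One subtlety to handle carefully is the tie-breaking / "equivalently good message" issue flagged in the overview: the induction along the path should be stated in terms of $\le$ on the $\distthrough$ value at each intermediate node, not in terms of which particular $b$ survives, and I would want to note that this is consistent because the filtering test compares exactly these $\distthrough$ values.

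I expect the main obstacle to be the path-propagation sub-induction, and in particular making precise the interaction between iterations: a node $v$ carries over $output^v$ from Iteration $j+1$, so when Iteration $j$ begins $v$ may already "know" a message corresponding to some $b \in B_{j+1}$, and the filter in Iteration $j$ will discard any $M(s, b')$ with $b' \in B_j$ that is not strictly better than that carried-over value. This is exactly what we want — it is the source of the congestion savings — but it means the path-propagation claim must allow the suppressing message at node $x$ to come from a previous iteration, not just from Iteration $j$. So the invariant needs to read: at the end of Iteration $j$, every node $x$ on $\sp(b^\star, v)$ holds some message value $\le \distthrough_{b^\star}(s, x)$, where $\distthrough$ is computed with $x$'s local estimate $d^x(\cdot, x)$; establishing this requires the monotonicity fact that if $x$ precedes $x'$ on $\sp(b^\star, v)$ and $x$ holds value $\le \distthrough_{b^\star}(s,x)$, then $x'$ ends Iteration $j$ holding value $\le \distthrough_{b^\star}(s, x')$, which follows because $x$ forwards its message (or an even better one) and $\distthrough_{b^\star}(s, x) + \dist(x, x') = \distthrough_{b^\star}(s, x')$ along the shortest path. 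Once that monotonicity is in place the rest is bookkeeping; I would keep the proof short by isolating it as a single displayed invariant and verifying the inductive step in one or two lines.
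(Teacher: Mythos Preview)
Your plan matches the paper's approach closely: downward induction on $j$, with the $\le$ direction established by a path-propagation sub-induction along $\sp(u_0, v)$ where $u_0$ realizes the minimum over $B_j$, and with the hop-invariant phrased in terms of $\le$ on the $\distthrough$ value rather than tracking a specific surviving message. You also correctly flag the cross-iteration interaction as the main subtlety.

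There is one concrete gap in your justification of the monotonicity step. You write that it ``follows because $x$ forwards its message (or an even better one)'', but the filter in Step~\ref{line:filter-condition} of \Cref{alg:filtering2} requires a \emph{strict} improvement: if $output^x$ was already $\le \distest^{u_0}(s,u_0)+d^x(u_0,x)$ at the start of Iteration~$j$ (carried over from Iteration~$j{+}1$) and nothing strictly better arrives, then $x$ forwards nothing in Iteration~$j$, and your hop-induction stalls at $x$. The paper handles this by an explicit case split at the target $v$: if $\arg\min_{b\in B_j}(\distest^b(s,b)+d^v(b,v))$ meets $B_{j+1}$ (Case~1), the outer inductive hypothesis finishes immediately; otherwise (Case~2) one checks via the shortest-path structure (Equation~\eqref{eq:filtering-correcntess}) that $\distest^{u_0}(s,u_0)+d^{u_i}(u_0,u_i)$ is \emph{strictly} below $\distest^b(s,b)+d^{u_i}(b,u_i)$ for every $b\in B_{j+1}$ and every $u_i$ on the path, so no carried-over value can block, and the entire propagation provably occurs within Iteration~$j$. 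Your uniform-$\le$ formulation can also be made to work without the split, but then the stalling case must be dispatched separately --- for instance by observing that if $x$'s carried-over value from level $j{+}1$ is already good enough, then the outer inductive hypothesis applied at level $j{+}1$ to the next node $x'$, together with the triangle inequality along $\sp(u_0,v)$, shows $x'$'s carried-over value is good enough too. Either way, the one-line ``$x$ forwards'' does not cover it.
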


\begin{algorithm}
	\caption{\filteredbroadcast($s$, $B$, $\{\distest^b(s, b)\}_{b\in B}$, $\{d^v(b, v)\}_{b\in B, v\in V}$)}\label{alg:filtering2} 	
	
	\KwIn{Source node $s$ ($s$ knows that it is the source and will initiate the excution of this algorithm). 
		Set $B$ of {\em between-nodes} (every node knows whether it is in $B$ or not). 
		Every node $b\in B$ knows a distance estimate $\distest^b(s, b)$. 
		Every node $v\in V$ knows a distance $d^v(b, v)=\dist(b, v)$ for every $b\in B$. 
	}
	\KwOut{Every node $v$ has value $output^v=\min_{b\in B} \distest^b(s, b)+d^v(b, v)$.
	}
	
	\tcc{Note that the algorithm requires $d^v(b, v)=\dist(b, v)$, but that the input estimates $\distest^b(s, b)$ can be arbitrary numbers.}
	
	\medskip
	
	Let $B_0=B$. For any $j=1, \ldots, \log n$, define $B_j$ to be a set where we select each node in $B_{j-1}$ to $B_j$ with probability $1/2$. Let $B_{\log n + 1}=\emptyset$. Note that every node $b\in B$ can decide (randomly) whether or not it is in each $B_j$ without any communication. 
	
	
	Every node $v$ creates a variable $output^v=\infty$.  
	
	
	\label{line:filter-iteration} Execute the following in {\em iterations}. Starting from Iteration $j=\log(n)$ down to Iteration $j=0$. Each iteration lasts $n$ rounds. In Iteration $j$, do the following.
	
	{
		\begin{enumerate}[noitemsep,label=(\roman*)] 
			\item \label{line:filtering-init} Round $1$: In parallel, every node $b\in B_j$ sets $output^b=\distest^b(s, b)$ and sends message $M(s, b)=(s, b, \distest^b(s, b))$ to all neighbors. 
			
			\tcc{The M(s,b) of round 1 constitute the entire message set of iteraion $j$; future rounds then determine how these messages are passed on.}
			
			\item At Rounds $2$ to $n$, every node $v$ does the following. \label{line:filtering-each-round}
			
			\begin{enumerate}[noitemsep,nolistsep]
				\item Let $B^v$ be the set of nodes $b$ such that $v$ has received the message $M(s, b)$.   
				%
				%
				Let $b^*$ be the node in $B^v$ that minimizes $\distest^b(s, b)+d^v(b, v)$; ties can be broken arbitrarily. 
				
				\item \label{line:filter-condition} If $\distest^{b^*}(s, b^*)+d^v(b^*, v)<output^v$, then $v$ sends message  $M(s, b^*)$ to all neighbors and sets $output^v=\distest^{b^*}(s, b^*)+d^v(b^*, v)$. 
			\end{enumerate} 
			
		\end{enumerate}
	}	
\end{algorithm}

\begin{proof}
	It is clear that $output^v \geq \min_{b\in B_j} \distest^b(s, b)+d^v(b, v)$, because $v$ only considers values of the form $\distest^b(s, b)+d^v(b, v)$. The harder direction is to show that $output^v \leq \min_{b\in B_j} \distest^b(s, b)+d^v(b, v)$.

	We prove this by induction on $j$. The base case for $j = \log(n)  + 1$ is trivial, since we define $\min_{b\in \emptyset}(\cdot)=\infty$.
	For the induction step, we assume that the Lemma holds for  $j+1 \leq \log n + 1$, and will show that it holds for iteration $j$ as well. 
 
	Let us fix some particular node $v$. We now consider two cases; the first is much simpler.
	
	\paragraph{Case 1: $\arg\min_{b\in B_j} \distest^b(s, b)+d^v(b, v) \cap B_{j+1} \neq \emptyset$}. Intuitively, this is the case that the ``best" between-node for $v$ in $B_j$ is no better than the best node from $B_{j+1}$; since we know that Lemma \ref{thm:filtering-correctness} holds for iteration $j+1$ (inductive hypothesis), the assumption of Case 1 directly ensures that it also holds for iteration $j$ .
	
	\paragraph{Case 2: $\arg\min_{b\in B_j} \distest^b(s, b)+d^v(b, v) \cap B_{j+1} = \emptyset$}. The rest of the proof is concerned with this case. Let $u_0$ be any vertex in $\arg\min_{b\in B_j} \distest^b(s, b)+d^v(b, v)$, and say that $\sp(u_0,v) = (u_0, u_1, \ldots u_\ell=v)$, for some length $\ell$. 
	
	Because of the case assumption, we know that for any $b \in B_{j+1}$ we have $\distest^{u_0}(s, u_0)+d^v(u_0, v) < \distest^b(s, b)+d^v(b, v)$. Moreover, it is not hard to see that because $\sp(u_0,v)$ is a shortest path, we must also have 
	\begin{equation}
	\label{eq:filtering-correcntess}
	\distest^{u_0}(s, u_0)+d^{u_i}(u_0, u_i) < \distest^b(s, b)+d^{u_i}(b, u_i) \qquad \mbox{for every $b \in B_{j+1}$ and $u_i \in \pi(s,v)$.}
	\end{equation}
	
	Now, the intuition behind the proof of Lemma \ref{thm:filtering-correctness} is that $v$ should receive the message $M(s,u_0)$, which by choice of $u_0$ will ensure that $v$ sets $output^v \leq \min_{b\in B_j} \distest^b(s, b)+d^v(b, v)$.
	The reason we expect this message to travel all the way to $v$ is
	because by Equation \ref{eq:filtering-correcntess}, for every $u_i \in \sp(u_0,v)$, $u_0 \in B_j$ is a better between node for $u_i$ than all $b \in B_{j+1}$, so message $M(s,u_0)$ will pass the filter in Step \ref{line:filter-condition} of \cref{alg:filtering2}. The one issue with this proof is that some $u_i$ may fail to pass along $M(s,u_0)$ if it passed along an equally good message $M(s,b)$ in an earlier round of iteration $j$; but this is still fine, as we will show that because $\sp(u_0,v)$ is a shortest path, this message $M(s,b)$ is also good for $v$.

	\begin{claim}\label{thm:filter-good-message-received}
	For every $r$, by the end of Round $r$ of Iteration $j$, $u_r$ has received a message $M(s, b)$ for some $b$ such that 
	\begin{align}
	\distest^{b}(s, b)+d^{u_r}(b, u_r)\leq \distest^{u_0}(s, u_0)+d^{u_r}(u_0, u_r).\label{eq:filter-claim-message-condition}
	\end{align}
	\end{claim}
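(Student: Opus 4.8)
The plan is to prove \Cref{thm:filter-good-message-received} by induction on $r$, with a single \emph{monotonicity} observation doing all the real work. For a node $x$ and a between-node $b$, abbreviate $g_x(b) := \distest^b(s,b) + d^x(b,x)$; since $d^x(b,x) = \dist(b,x)$ by the input hypothesis of \Cref{alg:filtering2}, Equation~\ref{eq:filter-claim-message-condition} just asserts that $u_r$ has received some $M(s,b)$ with $g_{u_r}(b) \le g_{u_r}(u_0)$ — call such a message \emph{good for $u_r$}. The monotonicity observation is: for two consecutive vertices $u_{r-1},u_r$ of $\sp(u_0,v)$, every $b$ satisfies $g_{u_r}(b) \le g_{u_{r-1}}(b) + w(u_{r-1},u_r)$ (triangle inequality applied to $\dist(b,\cdot)$), whereas $g_{u_r}(u_0) = g_{u_{r-1}}(u_0) + w(u_{r-1},u_r)$ holds with \emph{equality}, because a prefix of the shortest path $\sp(u_0,v)$ is itself a shortest path, so $\dist(u_0,u_r) = \dist(u_0,u_{r-1}) + w(u_{r-1},u_r)$. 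Hence a message that is good for $u_{r-1}$ is automatically good for $u_r$. (This is the rigorous content behind the heuristic surrounding Equation~\ref{eq:filtering-correcntess}, now for arbitrary $b$ rather than only $b\in B_{j+1}$.)

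The base cases $r\in\{0,1\}$ are immediate: $u_0\in B_j$ (it lies in $\arg\min_{b\in B_j} g_v(b)$), so $u_0$ holds $M(s,u_0)$, which is trivially good for $u_0$; and in Round~$1$ of iteration $j$ the node $u_0\in B_j$ broadcasts $M(s,u_0)$ along every incident edge, in particular to $u_1$, so $u_1$ has a message good for it. For the inductive step, take $r\ge 2$ and assume $u_{r-1}$ has received, by the end of Round $r-1$, a message good for $u_{r-1}$. I would case on $output^{u_{r-1}}$ at the start of Round $r$. If $output^{u_{r-1}} > g_{u_{r-1}}(u_0)$, then in Round $r$ node $u_{r-1}$ runs Step~\ref{line:filtering-each-round}; the minimizer $b^*$ it selects over its received set satisfies $g_{u_{r-1}}(b^*) \le g_{u_{r-1}}(u_0) < output^{u_{r-1}}$ (the first inequality because the good message already received by $u_{r-1}$ is a candidate), so the filter of Step~\ref{line:filter-condition} passes $M(s,b^*)$ to all neighbors, $u_r$ receives it by the end of Round $r$, and monotonicity makes it good for $u_r$. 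If instead $output^{u_{r-1}} \le g_{u_{r-1}}(u_0)$, then — since $output^{u_{r-1}}$ starts at $\infty$ and is lowered only in rounds in which $u_{r-1}$ itself broadcasts a message, being set to exactly that message's $g_{u_{r-1}}$-value — there is a round $\rho\le r-1$ in which $u_{r-1}$ broadcast some $M(s,b'')$ with $g_{u_{r-1}}(b'') = output^{u_{r-1}} \le g_{u_{r-1}}(u_0)$; so $u_r$ received $M(s,b'')$ by the end of Round $\rho\le r$, and monotonicity again makes it good for $u_r$. This closes the induction.

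Granting the Claim, \Cref{thm:filtering-correctness} in Case~2 follows by taking $r = \ell = \hop(u_0,v) \le n-1$: then $v = u_\ell$ holds, by the end of Round $\ell$, a message $M(s,b)$ good for $v$, so at Round $\ell+1\le n$ (still within iteration $j$) it drives $output^v$ down to at most $g_v(b)\le g_v(u_0)=\min_{b'\in B_j} g_v(b')$, which is exactly what Case~2 needs. Degenerate $\infty$-values (for instance $\distest^{u_0}(s,u_0)=\infty$) are handled by the convention of \Cref{sec:prelim} that $\infty$ is replaced by a sufficiently large polynomial, so all the $g$-values above are finite; alternatively one notes the Claim is only invoked when $\min_{b'\in B_j} g_v(b')<\infty$, since otherwise \Cref{thm:filtering-correctness} is vacuous in Case~2.

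The step I expect to require the most care is the second case of the inductive step. The naive hope — ``the single message $M(s,u_0)$ itself travels all the way down $\sp(u_0,v)$'' — is not quite correct, since some $u_{r-1}$ may already have committed, earlier in iteration $j$, to a message no worse from its own viewpoint and hence never re-broadcast $M(s,u_0)$ (this is exactly the ``one issue'' flagged right after Equation~\ref{eq:filtering-correcntess}). The fix rests on two points: whatever $output^{u_{r-1}}$ currently records was genuinely \emph{transmitted} by $u_{r-1}$ in an earlier round, so it has already reached the neighbor $u_r$; and goodness is preserved when passing to the successor on a shortest path. Getting the bookkeeping of the first point right — that $output^{u_{r-1}}$ is always witnessed by an actually-sent message and is never a merely ``virtual'' value — is the only delicate part of the argument.
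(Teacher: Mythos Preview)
Your proof is correct and follows the same inductive skeleton as the paper's: induction on $r$, with the ``monotonicity'' observation (your $g_{u_r}(b)\le g_{u_{r-1}}(b)+w(u_{r-1},u_r)$ versus $g_{u_r}(u_0)=g_{u_{r-1}}(u_0)+w(u_{r-1},u_r)$) doing the work of the paper's displayed chain of inequalities. The difference is in how the forwarding step is argued. The paper tracks the \emph{first} good message that reaches $u_r$ and uses the Case-2 assumption (Equation~\ref{eq:filtering-correcntess}) to conclude that this first arrival happens within iteration $j$, hence strictly beats the current $output^{u_r}$ and is forwarded. You instead case on $output^{u_{r-1}}$ and use the invariant that $output$ always equals the $g$-value of the last message actually broadcast; this avoids any appeal to the Case-2 context inside the claim and handles cleanly the situation where the relevant broadcast happened in a previous iteration. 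Both routes are valid; yours is a bit more self-contained, while the paper's avoids the two-case split.

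One small wording fix: in Step~\ref{line:filtering-init} of \Cref{alg:filtering2}, a node $b\in B_j$ unconditionally sets $output^b=\distest^b(s,b)$, which can \emph{increase} $output^b$ relative to its value after iteration $j{+}1$. So ``$output^{u_{r-1}}$ is lowered only in rounds in which $u_{r-1}$ broadcasts'' is not literally true. The invariant you actually need---and which does hold---is that every \emph{change} to $output^{u_{r-1}}$ (up or down) is accompanied by a broadcast of a message whose $g_{u_{r-1}}$-value equals the new $output^{u_{r-1}}$. With that phrasing your Case~2 goes through verbatim.
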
 
	\begin{proof}
%
		The proof is by induction on the number of $r$. Node $u_0 \in B_j$ sends $M(s, u_0)$ in Round $1$ of Iteration $j$, so the claim is obviously true for $u_1$. We now assume assume that the claim is true for some $u_r$, with $r \geq 1$, and show that the claim must hold for $u_{r+1}$ as well.
		Let us consider the \emph{first} message $M(s,b)$ received by $u_r$ for which \Cref{eq:filter-claim-message-condition} is satisfied;
		by the induction hypothesis, $u_r$ receives this message at some time $t \leq r$, and moreover, by Equation \ref{eq:filtering-correcntess}, this event first occurs in iteration $j$; it could not have occurred in an earlier iteration $j' > j$. Thus, by Step~\ref{line:filter-condition} of \Cref{alg:filtering2}), we know that at time $t$, $u_r$ set $output^{u_r}$ to $\distest^{b}(s, b)+d^{u_r}(b, u_r)$, and sent message $M(s,b)$ to all its neighbors, including $u_{r+1}$

	Thus, $u_{r+1}$ receives message $M(s,b)$ at time $t+1 \leq r+1$. Observe that:
		\begin{align*}
		&\distest^{b}(s, b)+d^{u_{r+1}}(b, u_{r+1})\\
		&=	\distest^{b}(s, b)+\dist(b, u_{r+1}) &\mbox{(by input condition of \Cref{alg:filtering2})}\\
		&\leq \distest^{b}(s, b)+\dist(b, u_{r})+ \dist(u_r, u_{r+1}) &\mbox{(by triangle inequality)}\\
		&= \distest^{b}(s, b)+d^{u_r}(b, u_{r})+ \dist(u_r, u_{r+1}) &\mbox{(by input condition of \Cref{alg:filtering2})}\\
		&\leq \distest^{u_0}(s, u_0)+d^{u_r}(u_0, u_r)+ \dist(u_r, u_{r+1})  &\mbox{(by \Cref{eq:filter-claim-message-condition})}\\
		&= \distest^{u_0}(s, u_0)+\dist(u_0, u_r)+ \dist(u_r, u_{r+1})  &\mbox{(by input condition of \Cref{alg:filtering2})}\\
		&= \distest^{u_0}(s, u_0)+\dist(u_0, u_{r+1}) &\mbox{(since $u_r$ is on the shortest $(u_0u_{r+1})$-path)}\\
		&\leq \distest^{u_0}(s, u_0)+d^{u_{r+1}}(u_0, u_{r+1}) &\mbox{(by input condition of \Cref{alg:filtering2})}.
		\end{align*}
		Message $M(s, b)$ thus satisfies \Cref{eq:filter-claim-message-condition} for node $u_{r+1}$, which completes the induction proof of \Cref{thm:filter-good-message-received} 
	\end{proof}
Since each iteration has $n$ rounds, \Cref{thm:filter-good-message-received} implies that node $v$ will receive $M(s, b)$ satisfying \Cref{eq:filter-claim-message-condition} by the end of Iteration $j$; thus by the choice of $u_0$, Step~\ref{line:filter-condition} of \Cref{alg:filtering2}) sets $output^v$ to be at most 
$$\distest^{b}(s, b)+d^{u_r}(b, u_r)\leq \distest^{u_0}(s, u_0)+d^{u_r}(u_0, u_r)=\min_{b\in B_j} \distest^b(s, b)+d^v(b, v,)$$ 
as desired. This concludes the proof of \Cref{thm:filtering-correctness}. 
\end{proof}

\begin{observation}
	\label{obs:h}
	 We do not need this for our main result, but we note that if the input had the additional guarantee that all hop-distances between $B$ and $V$ were at most $h$, then we would only need to run \filteredbroadcast for $h$ rounds instead of $n$; this is because in case 2 of the proof, $h$ rounds would suffice for the message to propagate from $u_0$ to $u_\ell = v$.
\end{observation}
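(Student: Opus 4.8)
The plan is to reopen the proof of \Cref{thm:filtering-correctness} and identify the single place where the choice of $n$ rounds per iteration is actually invoked, and then to check that under the extra hypothesis — namely, every $b\in B$ and $v\in V$ have $\hop(b,v)\le h$ — the weaker bound of $h$ rounds per iteration already suffices for the same conclusion. The statement of \Cref{thm:filtering-correctness} itself does not change; what changes is only the length of each iteration in the loop of \Cref{alg:filtering2}, from $n$ down to $h$.

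First I would walk through the induction on $j$ in the proof of \Cref{thm:filtering-correctness}. The base case $j=\log n+1$ is vacuous, and Case~1 of the induction step — where the set of minimizers of $\distest^b(s,b)+d^v(b,v)$ over $b\in B_j$ already meets $B_{j+1}$ — uses \emph{no} rounds of iteration $j$: it merely invokes the inductive hypothesis for iteration $j+1$. Hence the only constraint on the length of iteration $j$ comes from Case~2, and there it enters exactly once, when \Cref{thm:filter-good-message-received} is applied with $r=\ell$ to conclude that $v=u_\ell$ has received, by the end of iteration $j$, a message $M(s,b)$ satisfying \Cref{eq:filter-claim-message-condition}. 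Since \Cref{thm:filter-good-message-received} is itself proved by an induction on $r$ along the shortest path $\sp(u_0,v)=(u_0,u_1,\dots,u_\ell=v)$ that advances one hop per round, it needs iteration $j$ to last at least $\ell$ rounds.

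Next I would bound $\ell$. Because each $B_j$ is a subsample of $B_{j-1}$, we have $u_0\in B_j\subseteq B$, while $v\in V$; and since $\sp(u_0,v)$ is a shortest path, $\ell=\hop(u_0,v)$, which by hypothesis is at most $h$. So $h$ rounds of iteration $j$ already deliver a suitable message to $u_\ell=v$, and the rest of Case~2 — and therefore the whole induction on $j$ — goes through verbatim. Consequently, running each of the $O(\log n)$ iterations of \Cref{alg:filtering2} for $h$ rounds rather than $n$ preserves the guarantee of \Cref{thm:filtering-correctness}, so \filteredbroadcast runs in $O(h\log n)$ rounds with $O(\log^2 n)$ congestion per edge; scheduling the $|A|$ copies (one per $a\in A$) via \Cref{thm:Ghaffari} then yields the $\ot(|A|+h)$ bound mentioned in the introduction.

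There is no real obstacle here: $n$ enters the argument only as an upper bound on $\hop(u_0,v)$, so replacing it by $h$ is free as soon as $h\ge\hop(u_0,v)$. The one thing to check carefully is that none of the messaging dynamics — the filtering test in Step~\ref{line:filter-condition} of \Cref{alg:filtering2}, or the races between equally good messages along the path — secretly rely on iterations being long; but \Cref{thm:filter-good-message-received} already packages all of that into a per-round statement, so once $h\ge\ell$ the conclusion of \Cref{thm:filtering-correctness} is reached within $h$ rounds of every iteration.
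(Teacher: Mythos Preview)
Your proposal is correct and follows exactly the justification the paper gives inside the observation itself: the only place the $n$-round bound is used in the proof of \Cref{thm:filtering-correctness} is to ensure that iteration $j$ lasts at least $\ell=\hop(u_0,v)$ rounds in Case~2, and under the extra hypothesis $\ell\le h$. Your additional derivation of the $\ot(|A|+h)$ bound via \Cref{thm:Ghaffari} is a natural and correct elaboration beyond what the observation explicitly states.
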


\subsection{Complexity of \filteredbroadcast (\Cref{alg:filtering2})}

The time complexity of \Cref{alg:filtering2} is clearly $O(n\log(n))$, since there are $O(\log(n))$ iterations, and each is specified to run for $O(n)$ rounds. Now we show that the algorithm creates low congestion on every edge, and thus can be easily parallelized.  

\begin{lemma}\label{thm:filter-complexity}
W.h.p., every node $v$ sends to its neighbors $O(\log n)$ messages of the form $M(s, b)$ in each iteration $j$ of \Cref{alg:filtering2}. 
\end{lemma}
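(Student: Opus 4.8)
The plan is to fix an iteration $j$ and a node $v$, and bound the number of distinct messages $M(s,b)$ that $v$ ever forwards during iteration $j$. The key observation is that $v$ forwards a message only when it strictly improves $output^v$; so the number of forwarded messages equals the number of times $v$ strictly decreases $output^v$ during iteration $j$. Each such decrease corresponds to $v$ seeing some $b \in B_j$ with a strictly smaller value of $\distest^b(s,b) + d^v(b,v)$ than any previously-seen node. So it suffices to bound the number of nodes $b \in B_j$ that can ever be ``record-setters'' for $v$ \emph{and are seen by $v$ for the first time during iteration $j$} (nodes already seen in earlier iterations $j' > j$ cannot trigger a fresh decrease unless they beat the current record, but the relevant count is really: how many strict improvements happen in iteration $j$).

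The heart of the argument is the sampling bound sketched in \cref{sec:overview}. Order the nodes of $B_j$ by their ``badness from $v$'s perspective'', i.e.\ in increasing order of $\distest^b(s,b) + d^v(b,v)$ (breaking ties by ID): call them $b_1, b_2, \ldots$. I claim that if $v$ strictly improves $output^v$ in iteration $j$ by forwarding $M(s,b_t)$ for some $t > c\log n$ (with $c$ the constant controlling our w.h.p.\ guarantees), then none of $b_1, \ldots, b_{t-1}$ lies in $B_{j+1}$. Indeed, if some $b_r \in B_{j+1}$ with $r < t$, then by the correctness of iteration $j+1$ (\cref{thm:filtering-correctness}, inductively), at the end of iteration $j+1$ we already had $output^v \le \distest^{b_r}(s,b_r) + d^v(b_r, v) \le \distest^{b_t}(s,b_t) + d^v(b_t,v)$, so forwarding $M(s,b_t)$ in iteration $j$ would not be a strict improvement. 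Hence: the only nodes $b_t$ that can cause a strict improvement for $v$ in iteration $j$ are among $b_1, \ldots, b_T$ where $T$ is the smallest index such that $b_1, \ldots, b_T$ contains a member of $B_{j+1}$ — together with at most one more (the one whose value ties or beats that member). Since each node of $B_j$ is independently placed into $B_{j+1}$ with probability $1/2$, the probability that none of $b_1, \ldots, b_{c\log n}$ lands in $B_{j+1}$ is $2^{-c\log n} = n^{-c}$. So w.h.p.\ $T \le c\log n$, and therefore $v$ forwards at most $c\log n + 1 = O(\log n)$ messages in iteration $j$.

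To finish, I would union-bound this event over all $n$ nodes $v$ and all $O(\log n)$ iterations $j$, losing only another $\polylog(n)$ factor in the failure probability, which is still $n^{-c'}$ for a suitably large constant; this gives the w.h.p.\ statement for all $v$ and all $j$ simultaneously. One technical point to handle carefully: the ordering $b_1, b_2, \ldots$ used above is determined entirely by the input data $\distest^{\cdot}(s,\cdot)$ and the fixed distances $d^v(\cdot, v)$, which are fixed \emph{before} the random choice of the sets $B_{j+1}$; so the independence argument is clean — the randomness defining $B_{j+1}$ (a subsample of $B_j$, hence itself a function of independent coin flips on $B = B_0$) is independent of this ordering, and in particular the event ``$\{b_1, \ldots, b_{c\log n}\} \cap B_{j+1} = \emptyset$'' has probability exactly $2^{-c\log n}$ conditioned on $B_j$. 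The main obstacle, and the part requiring the most care in writing, is precisely this: articulating exactly which randomness is being conditioned on and verifying that the ``record-setter prefix'' and the subsampling event are independent, and that the inductive use of \cref{thm:filtering-correctness} for iteration $j+1$ is legitimate (it is, since that lemma is proved deterministically given the sampled sets, and here we are conditioning on those sets). Everything else — translating ``number of forwarded messages'' into ``number of strict improvements'' and the final union bound — is routine.
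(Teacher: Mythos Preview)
Your argument is essentially identical to the paper's: both order $B_j$ by $\distest^b(s,b)+d^v(b,v)$, invoke \Cref{thm:filtering-correctness} for iteration $j+1$ to argue that $v$ can only forward messages whose index in this order precedes the first element of $B_{j+1}$, and then bound that prefix length by $O(\log n)$ via the $1/2$-subsampling. One small omission: your sampling bound breaks for the top iteration $j=\log n$, since $B_{\log n+1}=\emptyset$ by definition and the event ``none of $b_1,\dots,b_{c\log n}$ lands in $B_{j+1}$'' has probability $1$; the paper handles this base case separately by noting $|B_{\log n}|=O(\log n)$ w.h.p., so $v$ trivially forwards at most $O(\log n)$ messages there.
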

\begin{proof}
	The claim is true for $j=\log n$ because by \cref{thm:standard-bounds}, $|B_{\log n}|=O(\log n)$ w.h.p.
	Now, Consider any Iteration $j<\log n$ of \Cref{alg:filtering2} and any node $v$. 
	Define for any node $b\in B$  
	$$B'_j(b, v)=\{b'\in B_j|  \distest^{b'}(s, b')+d^v(b', v)\leq  \distest^b(s, b)+d^v(b, v)\}.$$
	Observe that if $B'_j(b, v) \cap B_{j+1}\neq \emptyset$, 
	then $v$ will not send $M(s, b)$ to neighbors in Step \ref{line:filter-condition} of Iteration $j$,
	because by \cref{thm:filtering-correctness}, 	at the end of the previous iteration ($j+1$) we will already have 
	 $output^v = \min_{b' \in B_{j+1}} (\distest^b(s, b')+d^v(b', v))\leq  \distest^b(s, b)+d^v(b, v)$. 
	Observe further that the definition of $B'_j(b, v)$ does not depend on the randomness used to sample $B_{j+1}$; thus, since each $b \in B_j$ is sampled into $B_{j+1}$ with probability $1/2$, we have:
	$$\forall \gamma>0, \forall |B'_j(b, v)|\geq \gamma\log n,  Pr[ B'_j(b, v) \cap B_{j+1}=\emptyset]=1/2^{|B'_j(b, v)|}=O(n^{-\gamma}),$$
	
	Applying a union bound over the $\leq n$ possible values of $b$, we get: w.h.p, for all messages $M(s,b)$ sent by $v$ with $b \in B_j$, we have that $|B'_j(b, v)|=O(\log n)$.   \Cref{thm:filter-complexity} follows from the fact that there are $O(\log n)$ nodes $b \in B_j$ with $|B'_j(b, v)|=O(\log n)$. To see this, order nodes $b$ in $B_j$ by increasing values of $\distest^b(s, b)+d^v(b, v)$ (break ties arbitrarily). Observe that for the $i^{th}$ node $b$ in this order, $B'_j(b, v)$ contains all nodes that appear before $b$ in the order. Thus, only the first $O(\log(n))$ in this order have the property that $|B'_j(b, v)|=O(\log n)$. 
\end{proof}

\begin{corollary}
	\label{cor:filter-complexity}
	Over all $\log(n)$ iterations, 
	\cref{alg:filtering2} terminates in $O(n\log(n))$ rounds, and incurs a congestion of $O(\log^2(n))$ on each edge.
\end{corollary}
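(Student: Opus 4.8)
The plan is to read the round complexity straight off the specification of \Cref{alg:filtering2} and to obtain the congestion bound from \Cref{thm:filter-complexity} together with a union bound over nodes and iterations.

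First, for the running time: \Cref{alg:filtering2} executes iterations $j = \log(n), \log(n)-1, \dots, 0$, which is $O(\log n)$ iterations, and each iteration is explicitly specified to last exactly $n$ rounds. The initial step that samples the sets $B_j$ requires no communication, so the algorithm halts after $O(n\log n)$ rounds; note that this bound is deterministic and does not rely on the sampling.

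Second, for the congestion: fix an edge and an iteration $j$, and focus on one of its endpoints $u$. The only messages ever transmitted by $u$ are of the form $M(s,b)$, sent either in Round~$1$ (Step~\ref{line:filtering-init}) or in Rounds~$2,\dots,n$ (Step~\ref{line:filter-condition}). By \Cref{thm:filter-complexity}, w.h.p.\ $u$ sends at most $O(\log n)$ such messages to all of its neighbors during iteration $j$ — hence at most $O(\log n)$ across the fixed edge — and this already accounts for the messages triggered by the $j=\log n$ initialization, since that case is exactly the first case handled in the proof of \Cref{thm:filter-complexity} (using $|B_{\log n}| = O(\log n)$ w.h.p.). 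Summing over the $O(\log n)$ iterations yields $O(\log^2 n)$ messages across that edge, w.h.p. To make this hold simultaneously for the whole network, recall that \Cref{thm:filter-complexity} fails for a given node and iteration with probability at most $n^{-c}$ for an arbitrarily large constant $c$; a union bound over the $O(n\log n)$ (node, iteration) pairs keeps the failure probability at $n^{-c'}$ for a still-large constant $c'$, so w.h.p.\ every edge carries $O(\log^2 n)$ messages. This gives the claimed $O(n\log n)$ rounds and $O(\log^2 n)$ congestion.

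There is no real obstacle here — the statement is an immediate consequence of the algorithm's specification and of \Cref{thm:filter-complexity}. The only point that needs care is that \Cref{thm:filter-complexity} is a per-node, per-iteration high-probability bound, so one must explicitly take a union bound over all nodes and all iterations to obtain a single high-probability event that simultaneously controls the congestion of every edge; I would also remark that the messages generated by the initial broadcast from $B = B_0$ and the sampling of $B_{\log n}$ are subsumed by that lemma rather than needing separate treatment.
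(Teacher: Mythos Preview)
Your argument is correct and matches the paper's approach: the paper treats this corollary as immediate (no explicit proof is given) since the round count is read directly off the specification of \Cref{alg:filtering2} and the congestion bound follows by summing the per-iteration bound of \Cref{thm:filter-complexity} over the $O(\log n)$ iterations. Your added remark about the union bound over nodes and iterations is a reasonable bit of extra care, though note that the statement of \Cref{thm:filter-complexity} already quantifies over ``every node $v$'' and ``each iteration $j$'', so that union bound is arguably already absorbed into the lemma.
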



\subsection{Complexity of the Main Algorithm (\Cref{alg:main-algo2})}

Recall that \Cref{alg:main-algo2} runs in $k=\log(n)$ phases. We now analyze the complexity of an individual phase: summing over all the phases completes the proof of our main result.

\begin{lemma}
	\label{lem:main-complexity}
W.h.p phase $i$ of \Cref{alg:main-algo2} terminates in 
$O(n\log^2(n) + |S_i|\log^3(n)) = O(n\log^2(n) + n\log^4(n)/2^i)$ rounds
\end{lemma}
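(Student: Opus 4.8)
The plan is to bound the running time of each of the three steps of \Cref{alg:main-algo2} separately and then sum. By \Cref{thm:standard-bounds}, I condition on the high-probability event that $|S_i| = O(n\log n/2^i)$; everything below is stated on this event, together with the further high-probability events coming from the subroutines.

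First, Step \ref{line:main-bellman} runs Bellman-Ford to depth $h := c2^{i+1}\log n$ from every $s\in S_i$. These $|S_i|$ executions are perfectly synchronized — in the $r$-th Bellman-Ford round every node only needs to forward its current estimate toward each of the $|S_i|$ sources — so one can pipeline them: a single Bellman-Ford round is implemented in $O(|S_i|)$ CONGEST rounds, and the whole step finishes in $O(|S_i|\cdot h) = O\!\left(\tfrac{n\log n}{2^i}\cdot 2^{i+1}\log n\right) = O(n\log^2 n)$ rounds. (Equivalently, one may invoke \Cref{thm:Ghaffari} with dilation $h$ and congestion $O(|S_i|\cdot h)$, losing only an extra logarithmic factor, which is still absorbed into $O(n\log^2 n)$.) The final Step \ref{line:main-internal} involves no communication and costs no rounds.

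The bulk of the work is Step \ref{line:main-filtering}, which invokes \filteredbroadcast once for each $s\in S_i$ and schedules the invocations via \Cref{thm:Ghaffari}. By \Cref{cor:filter-complexity}, a single invocation has dilation $O(n\log n)$ and, w.h.p., incurs congestion $O(\log^2 n)$ on every edge. Since the $|S_i|$ invocations use independent randomness (and the messages of each are tagged by their distinct source $s$, so the filtering of one invocation never affects another), a union bound over the at most $n$ invocations — together with the $\le n$ nodes and $O(\log n)$ iterations already quantified inside \Cref{thm:filter-complexity} — gives that, w.h.p., the total congestion on any single edge summed over all invocations is $\sum_{s\in S_i}O(\log^2 n) = O(|S_i|\log^2 n)$. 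Feeding dilation $O(n\log n)$ and congestion $O(|S_i|\log^2 n)$ into \Cref{thm:Ghaffari} yields a running time of $O\!\big(n\log n + |S_i|\log^2 n\cdot\log n\big) = O(n\log n + |S_i|\log^3 n)$ for this step.

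Summing the three steps gives $O(n\log^2 n) + O(n\log n + |S_i|\log^3 n) + O(1) = O(n\log^2 n + |S_i|\log^3 n)$, and substituting $|S_i| = O(n\log n/2^i)$ yields the claimed $O(n\log^2 n + n\log^4 n/2^i)$. There is no genuine obstacle: the one point that needs care is asserting that the per-edge congestion bound of \Cref{cor:filter-complexity} holds \emph{simultaneously} across all $|S_i|$ parallel \filteredbroadcast calls, so that the aggregate congestion $O(|S_i|\log^2 n)$ can legitimately be passed to \Cref{thm:Ghaffari}; the heavy lifting — the randomized-filtering congestion analysis itself — was already carried out in \Cref{thm:filter-complexity}.
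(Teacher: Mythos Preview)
Your proof is correct and follows essentially the same approach as the paper: bound Step~\ref{line:main-bellman} by $O(|S_i|\cdot 2^{i+1}\log n)=O(n\log^2 n)$, note Step~\ref{line:main-internal} is free, and for Step~\ref{line:main-filtering} combine the per-instance dilation $O(n\log n)$ and congestion $O(\log^2 n)$ from \Cref{cor:filter-complexity} over $|S_i|$ instances via \Cref{thm:Ghaffari}. One small slip: in your parenthetical alternative for Step~\ref{line:main-bellman}, feeding congestion $O(|S_i|\cdot h)=O(n\log^2 n)$ into \Cref{thm:Ghaffari} yields $O(n\log^3 n)$, which is \emph{not} absorbed into $O(n\log^2 n)$; but your primary pipelining argument is fine and matches the paper.
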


\begin{proof}
By \Cref{thm:standard-bounds} and the properties of Bellman-Ford, Step \ref{line:main-bellman} of \Cref{alg:main-algo2} requires a total of $O(|S_i|2^{i+1}\log n)= O(n\log^2 n)$ rounds w.h.p.

Step \ref{line:main-internal} of \Cref{alg:main-algo2} does not require any communication, so all that remains is to analyze the number of rounds required for all the calls to \filteredbroadcast\ in Step \ref{line:main-filtering}. The algorithm runs $|S_i|$ instances of \filteredbroadcast\ in parallel. By corollary \ref{cor:filter-complexity} each runs in $O(n\log(n))$ rounds and incurs $O(\log^2(n))$ congestion per edge. Thus the total congestion is $O(|S_i|\log^2(n))$, so using the parallel scheduler in \cref{thm:Ghaffari} yields a total round complexity of $O(|S_i|\log^3(n))$, as desired.
\end{proof}


	
	\section{Open Problems}\label{sec:open}

As mentioned earlier, deterministic $\tilde O(n)$-time algorithms for APSP remains a key open problem. Additionally, while APSP admits an $\tilde \Omega(n)$ lower bound, it is a curious question whether this bound also holds for the following {\em strongly connected component} problem: We want every node to output a ``label'' such that two nodes are in the same strongly-connected component  if and only if their labels are the same (or even simpler, just counting the number of connected components). It should also be interesting to see how our algorithm performs in real systems (such as D-Galois \cite{HoangPDGYPR19}), and to see if our ideas are useful in computing various centrality measures (e.g.  \cite{HoangPDGYPR19}).

A few problems remain open for SSSP. An obvious one is closing the gap between  lower and upper bounds for SSSP \cite{ForsterN18,DasSarmaHKKNPPW12} and the single-source reachability problem \cite{GhaffariU15}.  Another question is whether the best upper bound for SSSP can be obtained without the scaling technique, so that we can avoid the dependency on the ratio between the highest and lowest edge weights. Recall that the previous state-of-the-art algorithms for both APSP and SSSP \cite{HuangNS17,GhaffariL18,ForsterN18} require this technique, but our algorithm does not.

This paper is part of an effort to understand {\em exact} distributed graph algorithms, and more generally to classify complexities of global problems in the CONGEST model. Many problems are yet to be settled, including minimum cut \cite{DagaHNS19}, maximum weight/cardinality matching \cite{AhmadiKO18-matching}, st-cut/flow \cite{GhaffariKKLP15}, vertex connectivity \cite{Censor-HillelGK14-decomposition}, and densest subgraph \cite{DasSarmaLNT12}. As mentioned earlier, settling the exact cases for other problems remains a major open problem. As mentioned in  \cite{Censor-HillelKP17,DagaHNS19}, tight bounds witnessed so far are in the form of either $\tilde \Theta(D)$, $\tilde \Theta(\sqrt{n}+D)$, $\tilde \Theta(n)$, or $\tilde \Theta(n^2)$. Any tight bound in-between is of our interest.

Finally, we propose studying the relationship between the {\em node-partition} two-party communication complexity and distributed graph algorithms in the CONGEST model. The only known technique to prove a lower bound of $\tilde \Omega(t)$ for any $t\geq n$ (e.g. \cite{FrischknechtHW12,AbboudCK16,Censor-HillelKP17,Nanongkai-STOC14}) in the CONGEST model is to partition nodes into two sides and argue (via two-party communication complexity-theoretic arguments) that there must be $\tilde \Omega(t|C|)$ bits of information between the two sides, where $C$ is the set of edges between the two sides. (See, e.g., \cite{Censor-HillelKP17} for details.) Is this the only technique for proving superlinear lower bounds? 
%
In particular, experiences from APSP (where this technique was shown incapable of proving a superlinear lower bound \cite{Censor-HillelKP17} before we settle a near-linear upper bound here) make it tempting to conjecture that if there is a protocol $\cal A$ with $\tilde O(t|C|)$ total communication for solving any graph problem $P$ in the two-party model above, then there is an $\tilde O(t)$-time CONGEST algorithm $\cal B$ for $P$, for any $t\geq n$. This conjecture sounds too good to be true in general. It will be extremely exciting already if it holds for some natural class of graph problems, even just for some $t\in(\sqrt{n}, n^2)$.
%
A related, more plausible, conjecture is to consider when $\cal A$ takes only $\tilde O(t)$ rounds. 


	\section{Acknowledgement} 
	This project has received funding from the European Research Council (ERC) under the European Union's Horizon 2020 research and innovation programme under grant agreement No 715672. Nanongkai was also partially supported by the Swedish Research Council (Reg. No. 2015-04659.)

	\ifdefined\AdvCite
	
	\printbibliography[heading=bibintoc] 
	
	\else
	
	\bibliographystyle{plain}
	\bibliography{references}
	
	\fi
	
	\appendix
	
\section*{Appendix} \label{sec:appendix}

\section{Bellman-Ford}
\label{sec:appendix-bellman}

Since it figures prominently in our main algorithm, we now describe the  well-known Bellman-Ford algorithm for computing SSSP from a source $s$ on network $G$ \cite{Bellman58,Ford56}. We omit the analysis of the algorithm, since it can be found in the citations. The algorithm runs for $h$ rounds, where $h$ is an input given by the user.

For any node $t$, let $d^t(s,t)$ denote the knowledge of $t$ about 
$\dist(s, t)$. Initially, $d^t(s, t)=\infty$ for every node $t$, except that $d^s(s, s)=0$. The algorithm proceeds as follows. 
\begin{enumerate}[noitemsep,label=(\roman*)] 
	\item In round 0, every node $t$ sends $d^t(s, t)$ to all its neighbors.
	\item When a node $t$ receives the message about $d^x(s, x)$ from its neighbors $x$, it uses the new information to decrease the value of $d^t(s, t)$ if $d^x(s, x) + w(x,t) < d^t(s,t)$. 
	\item If $d^t(s, t)$ decreases, then node $t$ sends the new value of $d^t(s, t)$ to all its neighbors.
	\item Repeat (ii) and (iii) for $h$ rounds. 
\end{enumerate}

Clearly, the above algorithm takes $O(h)$ rounds. Moreover, it can be proved that when the algorithm terminates 
$d^t(s, t)$ is $h$-hop-accurate for every node $t$ in $V$.

\section{Non-negative weights}
\label{sec:negative-weights}

\newcommand{\computephi}{{\sc compute}-$\phi$}

In this section we show that if the original graph has negative weights but no non-negative cycles, we can in $O(n)$ rounds transform it to a graph that has exactly the same shortest path structure, but has non-negative weights. This justifies the assumption of non-negative weights in \cref{sec:prelim}. (If the graph has a negative cycle, then the algorithm will discover this cycle within $O(n)$ rounds.)

Our transformation directly follows the technique of reduced costs used in Johnson's APSP algorithm in the static setting \cite{Johnson77}. The algorithm will compute a node value $\phi(v)$ for every node $v$ such that the following property
is satisfied: for every edge $(x,y) \in E$, $\phi(x) + w(x,y) -\phi(y) \geq 0$. We show how to compute the values $\phi(v)$ later. Once these values are computed, the algorithm creates a new edge-weight function $w':E \rightarrow \mathbb{R}_{\geq 0}$, where $w'(x,y) = \phi(x) + w'(x,y) - \phi(y)$. 
Let $G' = (V,E')$ the graph with the weight function $w'$ instead of $w$, and let $\dist'(s,t)$ be the shortest $s-t$ distance in $G'$. It is to easy to see that $G'$ satisfies the following properties:
\begin{enumerate}
\item $w'(u,v) \geq 0$ for every edge $(u,v)$.
\item for every pair of nodes $s$ and $t$ we have $\dist(s,t) = \dist'(s,t) + \phi(t) - \phi(s)$.
\end{enumerate}

Thus overall algorithm proceeds as follows. First it executes process \computephi, described below: at the end of
this process, each vertex $v$ knows its own value $\phi(v)$. Then each vertex $v$ broadcasts $\phi(v)$ to the entire graph:
by \cref{lem:broadcast}, this takes a total of $O(n)$ rounds. 


\begin{lemma}[Broadcasting \cite{Peleg00_book}]\label{lem:broadcast}
Suppose each $v\in V$ holds $k_v \geq 0$ messages of $O(\log n)$ bits each, for a total of $K = \sum_{v\in V} k_v$ messages.
Then all nodes in the network can receive these $K$ messages within $O(K + \diam)$ rounds.
\end{lemma}

The algorithm then executes the main distributed APSP algorithm described in this paper on on $G'$ instead of $G$: by Property 1 of $G'$ it only encounters non-negative weights, as desired. When the APSP algorithm on $G'$ terminates, the guarantee is that for every pair of nodes $s$ and $t$, node $t$ knows $\dist'(s,t)$. By Property 2 of $G'$, $t$ can then figure out $\dist(s,t)$ using its knowledge of $\dist(s,t)$, $\phi(s)$ and $\phi(t)$.

All we have left to show is how to execute \computephi. Let the graph $G^*$ be the original graph $G$, but with an additional vertex $s^*$, and a directed edge of weight $0$ from $s^*$ to every node $v$. The algorithm then sets $\phi(v) = \dist(s^*, v)$ for every $v$. It is not hard to check that because of the triangle inequality for shortest distances, we have
$\phi(x) + w(x,y) -\phi(y) \geq 0$, as desired. We can compute $\dist(s^*, v)$ for every vertex $v$ by simply running Bellman-Ford for $n$ rounds: to deal with the fact that vertex $s^*$ does not actually exist, the algorithm executes Bellman Ford exactly as described in \cref{sec:basic-algorithms}, except that in step i) it initializes 
$\dist(s^*,v) = 0$ for every node $v$. Note that Bellman-Ford will also detect if there exists a negative weight cycle in the graph.


\end{document}